\newcommand{\occ}[2]{\mathrm{occ} (#1, #2)}
\newcommand{\pred}{\mathrm{pred} }
\newcommand{\last}{\mathrm{last} }
\newcommand{\eps}{\varepsilon}
\newcommand{\tp}{\tilde{p}}
\newcommand{\calC}{{\mathcal C}}
\newcommand{\cN}{{\mathcal N}}
\newcommand{\cost}[1]{\ensuremath{\mathrm{cost} (#1)}}
\newenvironment{enumerate*}%
  {\vspace{-1ex}\begin{enumerate}%
    \setlength{\leftmargin}{1pt}%
    \setlength{\itemindent}{-3ex}
    \setlength{\parsep}{0pt}
  }%
  {\end{enumerate}}
\begin{document}

\title{Worst-Case Optimal Adaptive Prefix Coding}
\author{Travis Gagie\inst{1}\fnmsep\thanks
    {This paper was written while the second author was at the University of Eastern Piedmont, Italy, supported by Italy-Israel FIRB Project ``Pattern Discovery Algorithms in Discrete Structures, with Applications to Bioinformatics''.} \and Yakov Nekrich\inst{2}}
\authorrunning{T. Gagie and Y. Nekrich}
\institute{Research Group in Genome Informatics\\
    University of Bielefeld, Germany\\
    \email{travis.gagie@gmail.com}\\ \mbox{} \\
    Department of Computer Science\\
    University of Bonn, Germany\\
    \email{yasha@cs.uni-bonn.de}}
\maketitle

\begin{abstract}
  A common complaint about adaptive prefix coding is that it is much
  slower than static prefix coding.  Karpinski and Nekrich recently
  took an important step towards resolving this: they gave an adaptive
  Shannon coding algorithm that encodes each character in \(O (1)\)
  amortized time and decodes it in \(O (\log H)\) amortized time,
  where $H$ is the empirical entropy of the input string $s$.  For
  comparison, Gagie's adaptive Shannon coder and both Knuth's and
  Vitter's adaptive Huffman coders all use \(\Theta (H)\) amortized
  time for each character.  In this paper we give an adaptive Shannon
  coder that both encodes and decodes each character in \(O (1)\)
  worst-case time.  As with both previous adaptive Shannon coders, we
  store $s$ in at most \((H + 1) |s| + o (|s|)\) bits. We also show
  that this encoding length is worst-case optimal up to the lower order term.
\end{abstract}

\section{Introduction} \label{sec:introduction}

Adaptive prefix coding is a well studied problem whose well known and
widely used solution, adaptive Huffman coding, is nevertheless not worst-case optimal.  Suppose we are given a string $s$ of length
$m$ over an alphabet of size $n$.  For
static prefix coding, we are allowed to make two passes over $s$ but,
after the first pass, we must build a single prefix code, such as a
Shannon code~\cite{Sha48} or Huffman code~\cite{Huf52}, and use it to
encode every character.  Since a Huffman code minimizes the expected
codeword length, static Huffman coding is optimal (ignoring the
asymptotically negligible \(O (n \log n)\) bits needed to write the
code).  For adaptive prefix coding, we are allowed only one pass over
$s$ and must encode each character with a prefix code before reading
the next one, but we can change the code after each character.
Assuming we compute each code deterministically from the prefix of $s$
already encoded, we can later decode $s$ symmetrically.  The most
intuitive solution is to encode each character using a Huffman code
for the prefix already encoded;  Knuth~\cite{Knu85} showed how to do
this in time proportional to the length of the encoding produced,
taking advantage of a property of Huffman codes discovered by
Faller~\cite{Fal73} and Gallager~\cite{Gal78}.  Shortly thereafter,
Vitter~\cite{Vit87} gave another adaptive Huffman coder that also uses
time proportional to the encoding's length; he proved his coder stores
$s$ in fewer than $m$ more bits than static Huffman coding, and that
this is optimal for any adaptive Huffman coder.  With a similar
analysis, Milidi\'u, Laber and Pessoa~\cite{MLP99} later proved
Knuth's coder uses fewer than \(2 m\) more bits than static Huffman
coding.  In other words, Knuth's and Vitter's coders store $s$ in at
most \((H + 2 + h) m + o (m)\) and \((H + 1 + h) m + o (m)\) bits,
respectively, where \(H = \sum_a (\occ{a}{s} / m)
\log (m / \occ{a}{s})\) is the empirical entropy of $s$ (i.e., the entropy of the normalized distribution of characters in $s$), \(\occ{a}{s}\) is the number of
occurrences of the character $a$ in $s$, and \(h \in [0, 1)\) is the redundancy of a Huffman code for $s$; therefore, both adaptive Huffman coders use \(\Theta
(H)\) amortized time to encode and decode each character of $s$.
 Turpin
 and Moffat~\cite{TM01} gave an adaptive prefix coder that uses
 canonical codes, and showed it achieves nearly the same compression as
 adaptive Huffman coding but runs much faster in practice.  Their upper
 bound was still \(O (H)\) amortized time for each character but their
 work raised the question of asymptotically faster adaptive prefix
 coding.
In all of the above algorithms the encoding and decoding times
are proportional to the \emph{bit length} of the encoding. This implies that
we need $O(H)$ time to encode/decode each symbol; since entropy $H$ depends
on the size of the alphabet, the running times grow with
the alphabet size.

The above results for adaptive prefix coding
 are in contrast to the algorithms for the prefix coding in the
static scenario.
The simplest static Huffman coders use \(\Theta (H)\) amortized time
to encode and decode each character but, with a lookup table storing
the codewords, it is not hard to speed up encoding to take \(O (1)\)
worst-case time for each character.
We can also decode an arbitrary prefix code in \(O(1)\) time
using a look-up table, but the space usage and initialization time
for such a table can be prohibitively high, up to $O(m)$.
Moffat and Turpin~\cite{MT97}
described a practical algorithm for decoding prefix
codes in $O(1)$ time; their algorithm works for a special class
of prefix codes, the \emph{canonical codes} introduced by
 Schwartz and Kallick~\cite{SK64}.

While all adaptive coding methods described above maintain the optimal
Huffman code, Gagie~\cite{Gag07} described an adaptive prefix coder that
is based on sub-optimal Shannon coding; his method also
 needs $O(H)$ amortized
time per character for both encoding and decoding.
Although the algorithm  of~\cite{Gag07} maintains
a Shannon code that is known to be worse than the Huffman code
in the static scenario, it achieves $(H+1)m+o(m)$ upper bound on the
encoding length that is better than the best known
upper bounds for adaptive Huffman
algorithms. Karpinski and Nekrich~\cite{KN??} recently
reduced the gap between static and adaptive prefix coding
by using quantized canonical coding to speed up an adaptive Shannon coder
of
Gagie~\cite{Gag07}:  their coder  uses \(O (1)\) amortized time to encode
each character and \(O (\log H)\) amortized time to decode it; the encoding
length  is  also at most \((H + 1) m + o (m)\) bits.

In this paper we describe an algorithm  that both encodes
and decodes each character in \(O (1)\) worst-case time, while still
using at most \((H + 1) m + o (m)\) bits.
It can be shown that the encoding length of
any adaptive prefix coding algorithm
is $(H+1)m -o(m)$ bits in the worst case.
Thus our algorithm works in optimal worst-case time independently of the
 alphabet size  and achieves optimal encoding length (up to the lower-order
term).
 As is common, we assume \(n
\ll m\)\footnote{In fact, our main result is valid if $n=o(m/\log^{5/2}m)$.
For the result of section~\ref{sec:sorting} and two results in
section~\ref{sec:other} we need a somewhat stronger assumption that
$n=o(\sqrt{m/\log m})$}; for simplicity, we also assume $s$ contains at least two
distinct characters and $m$ is given in advance.  Our model is a
unit-cost word RAM with \(\Omega (\log m)\)-bit words on which it
takes \(O (1)\) time to input or output a word.
Our encoding algorithm uses only  addition and bit operations;
our decoding algorithm also uses multiplication and finding
the most significant bit in $O(1)$ time.
We can also implement the decoding algorithm,
so that it uses $AC^0$ operations only.
Encoding needs $O(n)$ words of space, and decoding needs $O(n\log m)$
words of space.
The decoding algorithm can be implemented with bit operations only at a
cost of higher space usage and additional pre-processing time.
For an arbitrary constant $\alpha>0$, we can construct in $O(m^{\alpha})$
time a look-up table that uses $O(m^{\alpha})$ space; this look-up
table enables us to implement multiplications with $O(1)$
table look-ups and bit operations.

While the algorithm of~\cite{KN??} uses quantized coding,
i.e., coding based on the quantized symbol probabilities,
our algorithm is based on delayed probabilities:
encoding of a symbol $s[i]$ uses a Shannon code for the prefix $s[1..i-d]$
for an appropriately chosen parameter $d$; henceforth $s[i]$ denotes the
$i$-th symbol in the string $s$ and $s[i..j]$ denotes the substring
of $s$ that consists of symbols $s[i]s[i+1]\ldots s[j]$.
In
Section~\ref{sec:canonical} we describe canonical Shannon codes and
explain how they can be used to speed up  Shannon coding.  In
Section~\ref{sec:data structures} we describe two useful data
structures that allow us to
maintain the Shannon code efficiently.
 We present our algorithm and analyze the number of bits
it needs to encode a string  in Section~\ref{sec:algorithm}.
In Section~\ref{sec:lower bound} we prove a matching lower bound
by extending Vitter's lower bound from adaptive Huffman coders to all
adaptive prefix coders. In section~\ref{sec:sorting} we show that our technique
can be applied to the online stable sorting problem.
In section~\ref{sec:other} we describe how we can use the same approach
of delayed adaptive coding to achieve $O(1)$ worst-case encoding time
for several other coding problems in the adaptive scenario.

\section{Canonical Shannon coding} \label{sec:canonical}

Shannon~\cite{Sha48} defined the entropy \(H (P)\) of a probability
distribution \(P = p_1, \ldots, p_n\) to be \(\sum_{i = 1}^n p_i
\log_2 (1 / p_i)\).\footnote{We assume \(0 \log (1 / 0) = 0\).}  He
then proved that, if $P$ is over an alphabet, then we can assign each
character with probability \(p_i > 0\) a prefix-free binary codeword
of length \(\lceil \log_2 (1 / p_i) \rceil\), so the expected
codeword length is less than \(H (P) + 1\);  we cannot, however,  assign
them codewords with expected length less than \(H (P)\).\footnote{In
  fact, these bounds hold for any size of code alphabet; we assume
  throughout that codewords are binary, and by $\log$ we always mean
  $\log_2$.}  Shannon's proof of his upper bound is simple: without
loss of generality, assume \(p_1 \geq \cdots \geq p_n > 0\); for \(1
\leq i \leq n\), let \(b_i = \sum_{j = 1}^{i - 1} p_j\); since \(|b_i
- b_{i'}| \geq p_i\) for \(i' \neq i\), the first \(\lceil \log (1 /
p_i) \rceil\) bits of $b_i$'s binary representation uniquely identify
it; let these bits be the codeword for the $i$th character.  The
codeword lengths do not change if, before applying Shannon's
construction, we replace each $p_i$ by \(1 / 2^{\lceil \log (1 / p_i)
  \rceil}\).  The code then produced is canonical~\cite{SK64}: i.e.,
if a codeword is the $c$th of length $r$, then it is the first $r$
bits of the binary representation of \(\sum_{\ell = 1}^{r - 1} W
(\ell) / 2^\ell + (c - 1) / 2^r\), where $W (\ell)$ is the number of
codewords of length $\ell$.  For example,
\[\begin{array}{rl@{\hspace{5ex}}rl}
1) & 000 & 7) & 1000\\
2) & 001 & 8) & 1001\\
3) & 0100 & 9) & 10100\\
4) & 0101 & 10) & 10101\\
5) & 0110 & \multicolumn{2}{c}{\raisebox{-.5ex}[0ex][0ex]{\vdots}}\\
6) & 0111 & 16) & 11011
\end{array}\]
are the codewords of a canonical code.  Notice the codewords are
always in lexicographic order.

Static prefix coding with a Shannon code stores $s$ in \((H + 1) m + o (m)\)
bits.  An advantage to using a canonical Shannon code is that we can
easily encode each character in \(O (1)\) worst-case time (apart from
first pass) and decode it symmetrically in \(O (\log \log m)\)
worst-case time (see~\cite{MT97}).  To encode a symbol $s[i]$ from $s$, it suffices to know the pair
\( \langle r, c \rangle\) such that the codeword for $s[i]$ is the
$c$-th codeword of length $r$, and the first codeword $l_r$ of length $r$.
Then the codeword for $s[i]$ can be computed in $O(1)$ time
as $l_r+c$.
We store the pair $\langle r,c \rangle$ for the $k$-th symbol
in the alphabet in the $k$-th entry of the array $C$.
The array $L[1..\lceil \log m \rceil]$ contains first codewords of
length $l$ for each $1\leq l \leq \lceil \log m \rceil$.
Thus, if we maintain arrays $L$ and $C$ we can encode a character
from $s$ in $O(1)$ time.

For decoding, we also need a data structure $D$ of size $\log m $
and a matrix $M$ of size $n\times \lceil\log m\rceil$.
For each $l$ such that there is at least one codeword of length
$l$, the data structure $D$ contains the first codeword of
length $l$  padded with $\lceil \log m \rceil - l$
$0$'s. For an integer $q$, $D$ can find the \emph{predecessor}
of $q$ in $D$, $\pred(q,D)=\max\{x\in D|x\leq q\}$.
The entry $M[r,c]$ of the matrix $M$ contains the symbol
$s$, such that the codeword for $s$ is the $c$-th codeword of length
$r$. The decoding algorithm reads the next $\lceil \log m\rceil$
bits into a variable $w$ and finds the predecessor of $w$ in $D$.
When $\pred(w,D)$ is known, we can determine the length $r$ of the
next codeword, and compute its index $c$ as $(w-L[r])\gg (\lceil \log m\rceil -r)$ where $\gg$ denotes the right bit shift operation.

The straightforward binary tree solution allows us to find
predecessors in $O(\log \log m)$ time.
We will see in the next section that predecessor queries on a set
of $\log m$ elements can be answered in $O(1)$ time. Hence, both
encoding and decoding can be performed in $O(1)$ time in  the
static scenario. In the adaptive  scenario, we must find a way
to maintain the arrays $C$ and $L$ efficiently and, in the case
of decoding,  the data structure $D$.

\section{Data structures} \label{sec:data structures}

It is not hard  to speed up the method for
decoding we described in Section~\ref{sec:canonical}.  For example, if
the augmented binary search tree we use as  $D$ is
optimal instead of balanced then, by Jensen's Inequality, we decode
each character in \(O (\log H)\) amortized time.  Even better, if we
use a data structure by Fredman and Willard~\cite{FW93}, then we can
decode each character in \(O (1)\) worst-case time.

\begin{lemma}[Fredman and Willard, 1993] \label{lem:dictionary}
  Given \(O (\log^{1 / 6} m)\) keys, in \(O (\log^{2 / 3} m)\)
  worst-case time we can build a data structure that stores those keys and
  supports predecessor queries in \(O (1)\) worst-case time.
\end{lemma}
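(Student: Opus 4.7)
The plan is to use the fusion tree construction of Fredman and Willard at a single node, tailored to $k = O(\log^{1/6} m)$ keys stored in a machine word of $w = \Omega(\log m)$ bits. First I would sort the $k$ keys in $O(k \log k) = O(\log^{1/6} m \cdot \log \log m)$ time, which is well within the preprocessing budget. Then I would locate the \emph{distinguishing bit positions} $b_1 < b_2 < \cdots < b_r$, namely the bit positions at which the binary trie over the sorted keys branches; since each internal trie node corresponds to a distinguishing position and there are fewer than $k$ internal nodes, $r \le k - 1 = O(\log^{1/6} m)$.

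The central step is to build a constant $M$ such that for every stored key $x$, the product $x \cdot M$, after masking, collects the bits $x[b_1], \ldots, x[b_r]$ into a short contiguous block $\mathrm{sketch}(x)$ of $O(r) = O(\log^{1/6} m)$ bits. The Fredman--Willard argument shows that a suitable $M$ can be found by an inductive / greedy construction that tries $O(r^3)$ candidate offsets, each check costing a constant number of word operations; the total cost is $O(r^4) = O(\log^{2/3} m)$, which meets the time bound. Once $M$ is fixed, I would pack $\mathrm{sketch}(x_1), \ldots, \mathrm{sketch}(x_k)$ into a single machine word, separated by sentinel bits; the total width is $k \cdot O(r) = O(\log^{1/3} m) \ll w$, so this fits.

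For a query $q$, compute $\mathrm{sketch}(q)$ in $O(1)$ time via one multiplication and one mask, then locate the rank of $\mathrm{sketch}(q)$ among the packed sketches using the standard parallel-comparison trick: form a word of $k$ copies of $\mathrm{sketch}(q)$, subtract from the packed sketch word, mask the sentinel bits, and use the $O(1)$-time most-significant-bit primitive to find the highest-ranked sketch that dominates $\mathrm{sketch}(q)$. This yields a candidate $x_j$. Since the sketch only keeps distinguishing bits, $x_j$ may not be the true predecessor of $q$; I would therefore compute the longest common prefix $\ell$ of $q$ with $x_j$ (via XOR and MSB), form $q' = (q \wedge \mathrm{mask}_\ell) \pm 1$ to fabricate a query whose trie descent follows the correct branch, and repeat the packed search once more. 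The standard analysis shows that two packed searches suffice to recover the true predecessor, giving $O(1)$ worst-case query time overall.

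The main obstacle is the existence and construction of the sketching multiplier $M$ within the $O(\log^{2/3} m)$ time budget; this is exactly where the number $O(\log^{1/6} m)$ of keys is used, because the greedy argument forces $r^4 \le w^{\,\text{const}}$, and $r = O(\log^{1/6} m)$ is chosen precisely so that $r^4 = O(\log^{2/3} m)$ fits under the preprocessing ceiling while $kr = O(\log^{1/3} m)$ sketches still fit in one word. The remaining steps --- sorting, packing, parallel comparison, and longest-common-prefix recovery --- are routine word-RAM primitives once $M$ is in hand.
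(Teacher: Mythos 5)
The paper does not actually prove this lemma: it is imported verbatim as a black-box result of Fredman and Willard \cite{FW93}, and only the corollary built on top of it (the five-level search tree over $O(\log m)$ keys) is argued in the text. What you have written is therefore a reconstruction of the fusion-tree node itself, and in outline it is the correct one: distinguishing bit positions from the compressed trie, a multiplicative sketch found by the greedy residue-avoidance argument in $O(r^4) = O(\log^{2/3} m)$ time, packed parallel comparison, and the two-pass query with a longest-common-prefix correction. All of that matches the standard Fredman--Willard argument.

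One quantitative point is off, and it is exactly the point that explains the exponent $1/6$. The $O(1)$-time sketch obtained by a single multiplication and mask is not $O(r)$ bits wide; the greedy construction only guarantees that the $r$ extracted bits land in \emph{distinct} positions inside a window of width $O(r^4)$, so each approximate sketch occupies $O(r^4)$ bits and the packed word has width $O(k \cdot r^4) = O(k^5)$. The constraint is $k^5 \le w = \Omega(\log m)$, which is precisely why the node capacity is $O(\log^{1/6} m)$ --- not, as you suggest, because $kr = O(\log^{1/3} m)$ fits in a word (that weaker accounting would wrongly permit far larger nodes). Your preprocessing budget of $O(r^4)$ for choosing the multiplier happens to coincide numerically with $O(\log^{2/3} m)$, so the stated time bound survives, and the packed width $O(\log^{5/6} m)$ still fits in a word, so the conclusion is unaffected; but the justification for the key count should rest on the word-size constraint $k^5 \le w$ rather than on the sketch width you computed.
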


\begin{corollary} \label{cor:dictionary}
  Given \(O (\log m)\) keys, in \(O (\log^{3 / 2} m)\) worst-case time
  we can build a data structure that stores those keys and supports
  predecessor queries in \(O (1)\) worst-case time.
\end{corollary}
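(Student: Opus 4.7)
The plan is to build a hierarchical (B-tree–like) structure of constant height whose nodes internally use the Fredman–Willard dictionary from Lemma~\ref{lem:dictionary}. Concretely, sort the $O(\log m)$ keys and partition them into consecutive groups of size $\lceil \log^{1/6} m \rceil$, yielding $O(\log^{5/6} m)$ leaf groups. For each group, invoke Lemma~\ref{lem:dictionary} to obtain a predecessor structure in $O(\log^{2/3} m)$ time; summed over all groups this costs $O(\log^{5/6} m \cdot \log^{2/3} m) = O(\log^{3/2} m)$, which will turn out to be the dominant term.

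Next, extract one representative per group (say the smallest key of each group) and recurse: partition these $O(\log^{5/6} m)$ representatives into groups of $\lceil \log^{1/6} m \rceil$ and again build a Fredman–Willard dictionary per group. After $t$ such levels the number of keys still to be handled is $O(\log^{1 - t/6} m)$, so after at most $5$ levels the remaining set fits in a single application of Lemma~\ref{lem:dictionary} and becomes the root. The total construction time is
\[
\sum_{t=0}^{5} O\!\left(\log^{(5-t)/6} m \cdot \log^{2/3} m\right) = O(\log^{3/2} m),
\]
dominated by the bottom level, so the overall build bound holds.

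To answer a predecessor query on a value $q$, start at the root, use the root's Fredman–Willard structure in $O(1)$ time to identify which child range $q$ falls into, and descend. Because the tree has height at most $6$, we perform $O(1)$ such $O(1)$-time local queries, giving an overall $O(1)$ worst-case query time. The main (very mild) obstacle is just checking that the per-level construction costs form a geometric sum dominated by the bottom level, and that at each internal node the separators behave correctly so that the descent really locates the predecessor; both are immediate from maintaining sorted order throughout the recursive partitioning.
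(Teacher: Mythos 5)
Your proposal is correct and is essentially identical to the paper's own proof: both build a constant-height search tree of degree \(O(\log^{1/6} m)\) whose nodes each store a Fredman--Willard structure from Lemma~\ref{lem:dictionary}, with the \(O(\log^{5/6} m \cdot \log^{2/3} m) = O(\log^{3/2} m)\) cost of the bottom level dominating the construction time and the constant height giving \(O(1)\) query time.
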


\begin{proof}
  We store the keys in the leaves of a search tree with degree \(O
  (\log^{1 / 6} m)\), size \(O (\log^{5 / 6} m)\) and height at most
  5.  Each node stores an instance of Fredman and Willard's data structure
  from Lemma~\ref{lem:dictionary}: each  data structure associated  with
  a leaf stores \(O
  (\log^{1 / 6} m)\) keys and each data structure associated with an
  internal node stores
  the first key in each of its children's data structures.  It is
  straightforward to build the search tree in \(O (\log^{2 / 3 + 5 /
    6} m) = O (\log^{3 / 2} m)\) time and implement queries in \(O
  (1)\) time.
\end{proof}

In Lemma~\ref{lem:dictionary} and Corollary~\ref{cor:dictionary} we
assume that multiplication and finding the most significant bit of
an integer can be performed in constant time. As shown in~\cite{ABT99},
we can implement the data structure of Lemma~\ref{lem:dictionary}
using $AC^0$ operations only.
We can restrict the set of elementary operations to bit operations
and table look-ups by increasing the space usage and preprocessing time
 to $O(m^{\eps})$.
In our case  all keys in the data structure $D$ are bounded
by $m$; hence, we can construct in $O(m^{\eps})$ time a look-up table
that uses $O(m^{\eps})$ space and allows us to multiply two integers
or find the most significant bit of an integer, in constant
time.

Corollary~\ref{cor:dictionary} is useful to us because the data structure
it describes not only supports predecessor queries in \(O (1)\)
worst-case time but can also be built in  time polylogarithmic in
$m$; the latter property will let our adaptive Shannon coder keep its
data structures nearly current by regularly rebuilding them.  The array $C$
and matrix $M$ cannot be built in \(o (n)\) time, however, so we
combine them in a data structure that can be updated incrementally.  Arrays $C[]$ and $L[]$, and the matrix $M$ defined in the
previous section, can be rebuilt as described in the next Lemma.

\begin{lemma}\label{lemma:rebuild}
If codeword lengths of $f\geq \log m$ symbols are changed,
we can rebuild arrays $C[]$ and $L[ ]$, and update the matrix $M$ in
$O(f)$ time.
\end{lemma}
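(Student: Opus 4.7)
The plan is to maintain, alongside $C$, $L$, and $M$, an auxiliary bucket structure: for each codeword length $\ell\in\{1,\ldots,\lceil\log m\rceil\}$, keep a dynamic array $A_\ell$ whose entries are precisely the symbols currently having codeword length $\ell$. I will maintain the invariant that the position of a symbol $a$ within $A_\ell$ coincides with its secondary index $c$ in the canonical code, so that $C[a]=\langle \ell,c\rangle$ iff $A_\ell[c]=a$, and also the invariant that the count $W(\ell)$ equals $|A_\ell|$. Since the canonical code is valid for \emph{any} ordering of the symbols within a given length class as long as encoder and decoder agree on that ordering, I am free to choose the order induced by the dynamic arrays.

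When one of the $f$ symbols $a$ changes its codeword length from $\ell_1$ to $\ell_2$, I apply a swap-and-pop step on $A_{\ell_1}$: look up $c_1$ from $C[a]$ in $O(1)$; if $a$ is not the last entry in $A_{\ell_1}$, swap it with the last entry $a'$, then fix $C[a']\leftarrow\langle \ell_1, c_1\rangle$ and $M[\ell_1,c_1]\leftarrow a'$; finally pop the tail and decrement $W(\ell_1)$. Then I append $a$ to $A_{\ell_2}$, setting $c_2\leftarrow|A_{\ell_2}|$, $C[a]\leftarrow\langle\ell_2,c_2\rangle$, $M[\ell_2,c_2]\leftarrow a$, and incrementing $W(\ell_2)$. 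This costs $O(1)$ time and triggers $O(1)$ updates to $C$ and $M$ per change, so processing all $f$ changes takes $O(f)$ time.

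After all length changes are absorbed, the counts $W(1),\ldots,W(\lceil\log m\rceil)$ are correct, and $L$ is rebuilt from scratch by evaluating the canonical prefix sum $L[\ell]=2\cdot(L[\ell-1]+W(\ell-1))$ (padded to $\ell$ bits) for $\ell=1,\ldots,\lceil\log m\rceil$. This scan takes $O(\log m)$ time, which is absorbed into $O(f)$ because of the hypothesis $f\geq\log m$. Thus the total cost is $O(f)+O(\log m)=O(f)$.

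The only subtle point is justifying that the swap-and-pop ordering really is a canonical code. This reduces to the observation made in Section~\ref{sec:canonical}: once the counts $W(\ell)$ are fixed, $L[\ell]$ is determined and the set of codewords of length $\ell$ is exactly $\{L[\ell]+c-1:1\le c\le W(\ell)\}$, which forms a valid prefix code regardless of which symbol is assigned to which position inside that set. So any bijection between $\{1,\ldots,W(\ell)\}$ and the symbols currently at length $\ell$ is a legal canonical assignment, and the dynamically maintained bijection $A_\ell$ satisfies the required consistency between $C$ and $M$ at all times.
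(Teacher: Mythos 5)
Your proof is correct and follows essentially the same approach as the paper: the paper also uses a per-length bucket structure (doubly-linked lists with a $\last$ pointer rather than dynamic arrays), performs the same replace-with-last-element step updating $O(1)$ entries of $C$ and $M$ per changed symbol, and rebuilds $L$ from scratch in $O(\log m)=O(f)$ time. Your explicit justification that any bijection within a length class yields a valid canonical assignment is a point the paper leaves implicit, but the argument is the same.
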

\begin{proof}
We maintain an array $S$ of doubly-linked lists. The doubly-linked list
$S[l]$ contains all symbols with codeword length $l$ sorted by
their codeword indices, i.e, the codeword of the $c$-th symbol
in array $S[r]$ is the $c$-th codeword of length $r$.
The number of codewords with lengths $l$ for each $1\leq l\leq \lceil
\log m \rceil$ is stored in the array $W[]$.
If the  codeword length of a symbol $a$ is changed
from $l_1$ to $l_2$, we replace $a$ with $\last[l_1]$ in
$S[l_1]$ and append $a$ at the end of $S[l_2]$.
Then, we set $C[k_l]= C[k_a]$ and $C[k_a]=\langle l_2, W[l_2]\rangle$,
where $k_a$ and $k_l$ are indices of $a$ and $\last[l_1]$ in the
array $C[]$.
Finally, we increment $W[l_2]$, decrement $W[l_1]$ and update
$\last[l_1]$ and $\last[l_2]$.
We also update entries $M[l_1,c_a]$, $M[l_1,W[l_1]-1]$ and $M[l_2,W[l_2] ]$
in the matrix $M$ accordingly, where $c_a$ is the index of $a$'s
codeword  before the update operation.
Thus the array $C$ and the matrix $M$ can be updated in $O(1)$ time
when the codeword length of a symbol is changed.

When codeword lengths of all $f$ symbols are changed, we can compute the
array $L$ from scratch in $O(\log m) =O(f)$ time.
\end{proof}

\section{Algorithm} \label{sec:algorithm}

The main idea of the algorithm of~\cite{KN??}, that achieves $O(1)$
amortized encoding cost per symbol, is \emph{quantization} of
probabilities.
The Shannon code is maintained for the probabilities
$\tilde{p}_j=\frac{\lceil i/q\rceil}{\lfloor\occ{a_i}{s[1..i]}/q\rfloor}$
where $\occ{a_j}{s[1..i]}$ denotes the number of occurrences of the symbol
$a_j$ in the string $s[1..i]$ and the parameter $q= \Theta(\log m)$.
The symbol $a_i$ must occur $q$ times
before  the denominator of the fraction $\tp_i$ is incremented by 1.
Roughly speaking, the value of $\tp_i$, and hence the codeword length
of $a_i$,  changes at most once after $\log m$ occurrences of $a_i$.
As shown in Lemma~\ref{lemma:rebuild},
we can rebuild the arrays $C[]$ and $L[]$ in $O(\log m)$ time.
Therefore encoding can be implemented in $O(1)$ amortized time.
However, it is not clear how to use this approach to obtain constant
worst-case time per symbol.


In this paper a different approach is used. Symbols $s[i+1],
s[i+2],\ldots, s[i+d]$ are
encoded with a Shannon code for the
prefix $s[1]s[2]\ldots s[i-d]$ of the input string.
Recall that in a traditional adaptive code the symbol $s[i+1]$ is
encoded with a code for $s[1]\ldots s[i]$.
While symbols $s[i+1]\ldots s[i+d]$ are encoded, we build an optimal
code for $s[1]\ldots s[i]$. The next group of symbols,
i.e. $s[i+d+1]\ldots s[i+2d]$ will be encoded with a Shannon
code for $s[1]\ldots s[i]$,  and the code for  $s[1]\ldots s[i+d]$
will be simultaneously rebuilt in the background.
Thus every symbol $s[j]$ is encoded with a Shannon code
for the prefix $s[1]\ldots s[j-t]$, $d\leq t < 2d$, of the input string.
That is, when a symbol $s[i]$ is encoded, its codeword length equals
\[\left\lceil\log \frac{i+n-t}{\max\left(\occ{s[i]}{s[1..i-t]},1\right) }\right\rceil\,.\]
We increased the enumerator of the fraction by $n$ and the denominator
is always at least $1$ because
we assume that  every character is assigned a codeword of length
$\lceil\log n -d \rceil$ before  encoding starts. We make this
 assumption only to  simplify the description of our algorithm.
There are others methods of dealing with characters that occur for
the first time in the input string that  are more practically efficient,
see e.g.~\cite{Knu85}. The method of~\cite{Knu85} can also be
used in our algorithm, but
it would not change the total encoding length.

Later we will show that the delay of at most $2d$ increases the length of
encoding only by a lower order term.
Now we turn to the description of the  procedure that updates
the code,
i.e. we will show how the code for $s[1]\ldots s[i]$ can
be obtained from the code for $s[1]\ldots s[i-d]$.

Let $\calC$ be an optimal code for $s[1]\ldots s[i-d]$ and
$\calC'$ be an optimal code for $s[1]\ldots s[i]$. As shown in
section~\ref{sec:canonical}, updating the code is equivalent
to updating the arrays $C[]$ and $L[]$,  the matrix $M$, and the
data structure $D$.
Since a group of $d$ symbols contains at most $d$ different symbols,
we must change codeword lengths of at most $d$ codewords.
The list of symbols $a_1,\ldots,a_k$ such that the codeword
length of $a_k$ must be changed can be constructed in $O(d)$
time. We can construct  an array  $L[]$ for the code $\calC'$
in $O(\max(d,\log m))$ time by Lemma~\ref{lemma:rebuild}.
The matrix $M$ and the array $C[]$ can be  updated in $O(d)$ time because
only $O(d)$ cells
of $M$ are modified. However, we cannot build
 new versions of $M$ and $C[]$ because they  contain $\Theta(n\log m)$
and $\Theta(n)$ cells respectively.
Since we must obtain the new version of $M$ while the old version
is still used, we modify $M$ so that each  cell of $M$ is allowed
to contain two different values, an old one and a new one.
For each cell $(r,c)$ of $M$ we store two values $M[r,c].old$ and
$M[r,c].new$ and the separating value $M[r,c].b$: when the symbol
$s[t]$, $t<M[r,c].b$, is decoded, we use $M[r,c].old$;  when the symbol
$s[t]$, $t\geq M[r,c].b$, is decoded, we use $M[r,c].new$.
The procedure for updating $M$ works as follows: we visit all cells
of $M$ that were modified when the code $\calC$ was constructed.
For every such cell we set $M[r,c].old=M[r,c].new$ and $M[r,c].b=+\infty$.
 Then, we add the new values for those cells of $M$ that must be
modified. For  every  cell that must be modified, the new value is stored in
$M[r,c].new$ and $M[r,c].b$ is set to $i+d$.
The array $C[]$ can be updated in the same way.
When the array $L[]$ is constructed, we can construct the
data structure $D$ in $O(\log^{3/2} m)$ time.

The algorithm described above updates the code if the codeword lengths
of some of the symbols $s[i-d+1]\ldots s[i]$ are changed.
But if some symbol $a$ does not occur in the substring
$s[i-d+1]\ldots s[i]$, its codeword length might still change
in the case when $\log(i)>\log(i_a)$ where $i_a=\max\{j<i|s[j]=a\}$.
We can, however,  maintain the following invariant on the codeword
 length $l_a$:
\begin{equation} \left\lceil \log \frac{i+n}{\max(\occ{a}{s[1..i-2d]},1) } \right\rceil
\leq l_a \leq
\left\lceil \log \frac{i+2n}{\max(\occ{a}{s[1..i-2d]},1) } \right\rceil\,.
\label{eqn:length}
\end{equation}
When the codeword length of a symbol $a$ must be modified, we set its
length to $\lceil \log \frac{i+2n}{\max(\occ{a}{s[1..i]},1) } \rceil$.
All symbols $a$ are also stored in the queue $Q$.
When the code $\calC'$ is constructed, we extract the
first $d$ symbols from $Q$, check whether their codeword lengths must be
 changed, and append those symbols at the end of $Q$.
Thus the codeword length of each symbol is checked at least once
when an arbitrary substring $s[u]\ldots s[u+n]$ of the input string
$s$ is processed.
Clearly, the invariant~\ref{eqn:length}  is maintained.

Thus the procedure for obtaining  the code $\calC'$ from the code $\calC$
consists of the following steps:
\begin{enumerate*}
\item
check symbols $s[i-d+1]\ldots s[i]$ and the first $d$ symbols in the
queue $Q$;
construct a list of codewords whose lengths must be changed;
remove the first $d$ symbols from $Q$ and append them at the end of $Q$
\item
traverse the list $\cN$ of modified cells in the matrix $M$
and the array $C[]$, and remove the old
values from those cells; empty the list $\cN$
\item
update the  array $C[]$  for code $\calC'$;
simultaneously, update the matrix $M$ and construct the list $\cN$ of modified
cells in $M$ and $C[]$
\item
construct the array $L$ and the data structure $D$ for the new code $\calC'$
\end{enumerate*}
Each of the steps described above, except the last one,
can be performed in $O(d)$ time;
the last step can be executed in $O(\max(d,\log^{3/2} m))$ time.
For  $d=\lfloor\log^{3/2}m\rfloor/2$, code $\calC$ can be constructed in
$O(d)$ time.
If the cost of constructing $\calC'$ is evenly distributed among symbols
$s[i-d],\ldots, s[i]$, then we spend $O(1)$ extra time when each  symbol
$s[j]$ is processed.
Since $\occ{a}{s[1..i-\lfloor\log^{3/2}m\rfloor]}\geq \max(\occ{a}{s[1..i]}-\lfloor \log^{3/2} m\rfloor,1)$,
we need at most
$$\left\lceil\log \frac{i+2n}{\max\left(\occ{s[i]}{s[1..i]}-\lfloor\log^{3/2}m \rfloor,1\right) } \right\rceil $$
bits to encode $s[i]$.
\begin{lemma} \label{lem:algorithm}
  We can keep an adaptive Shannon code such that, for \(1 \leq i \leq
  m\), the codeword for \(s [i]\) has length at most
\[\left\lceil \log \frac{i + 2 n}
  {\max \left( \occ{s [i]}{s [1..i]} - \lfloor \log^{3 / 2} m \rfloor,
      1 \right)} \right\rceil\] and we use \(O (1)\) worst-case time
to encode and decode each character.
\end{lemma}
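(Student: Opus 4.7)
The plan is to assemble the ingredients developed in this section. I would set $d=\lfloor \log^{3/2} m \rfloor /2$ and, at every instant, maintain two copies of the coding structures: an \emph{old} copy serving the foreground encoder/decoder and a \emph{new} copy being staged in the background. The duplication is realised through the $.old$, $.new$, $.b$ fields attached to the cells of $M$ and $C$ that were introduced just before the lemma; the smaller structures $L[]$ and $D$ can be rebuilt from scratch using Lemma~\ref{lemma:rebuild} and Corollary~\ref{cor:dictionary} respectively. With $C$, $L$, $M$, $D$ in place, encoding uses the $O(1)$ formula $L[r]+c$ and decoding uses the predecessor-plus-shift recipe of Section~\ref{sec:canonical}, each in $O(1)$ worst-case time per symbol.

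Next I would bound the cost of one background rebuild. Any window $s[i-d+1..i]$ contains at most $d$ distinct symbols, and popping $d$ items from the queue $Q$ gives $d$ further candidates for a length change, so step~1 of the update procedure produces at most $2d$ length modifications, and step~3 therefore touches only $O(d)$ cells of $M$ and $C$; step~2 walks the list $\cN$ of cells modified during the previous rebuild, whose size is also $O(d)$; step~4 rebuilds $L$ in $O(\log m)$ time by Lemma~\ref{lemma:rebuild} and $D$ in $O(\log^{3/2} m)$ time by Corollary~\ref{cor:dictionary}. With $d=\Theta(\log^{3/2} m)$ the total cost is $O(d)$, which I would spread uniformly across the $d$ input symbols $s[i-d+1..i]$ to obtain $O(1)$ worst-case work per symbol. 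Because the old code remains consultable throughout these $d$ steps thanks to the double buffering, this bound is genuinely worst-case and not merely amortised.

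To justify the codeword length I would invoke the queue invariant~(\ref{eqn:length}): since the first $d$ items of $Q$ are revisited at every background rebuild, every alphabet symbol gets its length re-evaluated within any $n$ consecutive input positions, so at position $i$ the code in force assigns $s[i]$ a word of length at most $\lceil \log((i+2n)/\max(\occ{s[i]}{s[1..i-2d]},1)) \rceil$. Since $\max(\occ{a}{s[1..i-2d]},1) \geq \max(\occ{a}{s[1..i]} - 2d,1)$ and $2d \leq \lfloor \log^{3/2} m \rfloor$, the fraction inside the logarithm is at most $(i+2n)/\max(\occ{s[i]}{s[1..i]}-\lfloor \log^{3/2}m\rfloor,1)$, which is exactly the bound claimed in the lemma.

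The part I would treat most carefully is the hand-off between old and new codes. Step~2 must clear the $.old$ entries of every cell touched during the previous rebuild \emph{before} step~3 stamps fresh $.new$ values with $.b=i+d$; otherwise a decoder executing at some time $t<i+d$ might read a cell whose $.new$ still carries an older switch time and thereby return a wrong symbol. Provided steps~2--4 are interleaved over the $d$-symbol window so that the new $C$, $L$, $M$, $D$ are fully populated before time $i+d$, every lookup performed by the $O(1)$-time encoder and decoder lands on consistent data, and the lemma follows.
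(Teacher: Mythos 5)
Your proposal is correct and follows essentially the same route as the paper: the delay parameter $d=\lfloor\log^{3/2}m\rfloor/2$, the double-buffered $M$ and $C$ with $.old/.new/.b$ fields, the queue $Q$ enforcing invariant~(\ref{eqn:length}), the $O(d)$ rebuild cost spread over the $d$-symbol window, and the final observation that $\occ{a}{s[1..i-2d]}\geq\max(\occ{a}{s[1..i]}-\lfloor\log^{3/2}m\rfloor,1)$ all match the paper's argument. The only difference is presentational: you make the old/new hand-off consistency explicit, which the paper leaves implicit.
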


Gagie~\cite{Gag07} and Karpinski and Nekrich~\cite{KN??} proved
inequalities that, together with Lemma~\ref{lem:algorithm}, immediately yield our result.
\begin{theorem} \label{thm:main}
  We can encode $s$ in at most \((H + 1) m + o (m)\) bits with an
  adaptive prefix coding algorithm that encodes and decodes each
  character in \(O (1)\) worst-case time.
\end{theorem}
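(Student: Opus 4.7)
The plan is to combine Lemma~\ref{lem:algorithm} with a standard sum-to-entropy estimate. The $O(1)$ worst-case time claim is immediate from the lemma, so the task reduces to showing that the total encoded length
\[
B \;=\; \sum_{i=1}^m \left\lceil \log \frac{i + 2n}{\max\bigl(\occ{s[i]}{s[1..i]} - \lfloor \log^{3/2} m \rfloor,\, 1\bigr)} \right\rceil
\]
is at most $(H+1)m + o(m)$. Writing $L = \lfloor \log^{3/2} m \rfloor$ and stripping the ceilings (which contribute at most $+m$ in total), it suffices to prove
\[
\sum_{i=1}^m \log(i + 2n) \;-\; \sum_{i=1}^m \log\max\bigl(\occ{s[i]}{s[1..i]} - L,\, 1\bigr) \;\leq\; Hm + o(m).
\]

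For the numerator sum I would write $\log(i+2n) = \log i + \log(1 + 2n/i)$; splitting at $i = 2n$ and using $\log(1 + 2n/i) = O(\log n)$ on the first piece and $O(n/i)$ on the second, the correction is $O(n \log m) = o(m)$ under the standing assumption $n \ll m$ (here $n = o(m/\log m)$ suffices). By Stirling, the sum is then at most $\log(m!) + o(m) \leq m \log m - (\log e)\,m + o(m)$.

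For the denominator sum I would group positions by symbol. A symbol $a$ with $k_a = \occ{a}{s}$ occurrences contributes $\sum_{j=1}^{k_a} \log\max(j - L,\, 1)$. On the range $j > 2L$ the inequality $\log(j-L) \geq \log j - 1$ holds, while for $j \leq 2L$ the corresponding $\log j$ terms are each at most $\log(2L) = O(\log\log m)$, so the shortfall against $\sum_{j=1}^{k_a} \log j = \log(k_a!)$ is at most $O(L \log\log m)$ per symbol. Summed over the $n$ symbols this loses $O(nL \log\log m)$, which is $o(m)$ under the hypothesis $n = o(m/\log^{5/2} m)$ stated in the introduction. Hence the denominator sum is at least $\sum_a \log(k_a!) - o(m) \geq m\log m - Hm - (\log e)\,m - o(m)$, using Stirling together with the defining identity $\sum_a k_a \log k_a = m\log m - Hm$ of empirical entropy.

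Subtracting, the $m \log m$ and $(\log e)\,m$ terms cancel between the two sums, and adding back $m$ for the ceilings yields the claimed bound $(H+1)m + o(m)$. The main obstacle is absorbing the shift by $L = \lfloor\log^{3/2} m\rfloor$ in the denominator: it must be swallowed by the $o(m)$ error term, which is precisely why the introduction's hypothesis $n = o(m/\log^{5/2}m)$ is imposed; all other error terms in the estimate are comfortably within $o(m)$ already, and this is the only place the full strength of the hypothesis is exercised.
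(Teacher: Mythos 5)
Your overall strategy is exactly the paper's: Theorem~\ref{thm:main} is Lemma~\ref{lem:algorithm} plus the summation bound (the paper's Lemma~\ref{lem:analysis}), with the ceilings contributing $+m$, the numerator sum compared against $\log(m!)$, and the denominator sum regrouped by symbol and compared against $\sum_a\log(\occ{a}{s}!)$ via Stirling. The numerator estimate and the Stirling bookkeeping at the end are fine.

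There is, however, a genuine gap in your treatment of the denominator shortfall on the range $j>2L$. You invoke $\log(j-L)\ge\log j-1$, which concedes up to one bit \emph{per term}; a symbol with $k_a$ occurrences can have $k_a-2L$ terms in that range, so this inequality only bounds the per-symbol shortfall by $O(k_a)$, i.e.\ $O(m)$ in total --- which would degrade the final bound to $(H+2)m+o(m)$ and destroy exactly the distinction the theorem is about. The claimed per-symbol shortfall of $O(L\log\log m)$ does not follow from the inequality you state. What the range $j>2L$ actually needs is the sharper estimate $\log\frac{j}{j-L}\le\frac{L}{(j-L)\ln 2}$, whose sum over $j$ is a harmonic sum of size $O(L\log m)$ per symbol, hence $O(nL\log m)=O(n\log^{5/2}m)=o(m)$ in total. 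The paper sidesteps the ratio comparison entirely by shifting the index: $\sum_{j=1}^{k_a}\log\max(j-L,1)=\sum_{j=1}^{k_a-L}\log j\ge\log(k_a!)-L\log m$, since the $L$ discarded terms at the top of the range are each at most $\log m$. Either fix closes the gap; note that the correct error term is $O(nL\log m)$, not $O(nL\log\log m)$, so the hypothesis $n=o(m/\log^{5/2}m)$ is needed at full strength here, as you anticipated.
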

\noindent For the sake of completeness, we summarize and prove their inequalities as the following lemma:
\begin{lemma} \label{lem:analysis}
  \(\displaystyle \raisebox{-8ex}{} \sum_{i = 1}^m \left\lceil \log
    \frac{i + 2 n} {\max \left( \occ{s [i]}{s [1..i]} - \lfloor
        \log^{3 / 2} m \rfloor, 1 \right)} \right\rceil \leq (H + 1) m
  + O (n\log^{5/2} m)\,.\)
\end{lemma}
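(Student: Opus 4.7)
The plan is to strip the ceilings (contributing at most $m$ in total, since $\lceil x \rceil \leq x + 1$) and then decompose the argument of the remaining logarithm multiplicatively as
\[\frac{i+2n}{\max(k_i - L, 1)} = \frac{i}{k_i} \cdot \Bigl(1 + \frac{2n}{i}\Bigr) \cdot \frac{k_i}{\max(k_i - L, 1)}\,,\]
where $L = \lfloor \log^{3/2} m \rfloor$ and $k_i = \occ{s[i]}{s[1..i]}$. The first factor yields the main $Hm$ contribution, while the other two are small perturbations coming from the $+2n$ in the numerator and from the $-L$ shift in the denominator.

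For the main piece $A = \sum_i \log(i/k_i)$, I would regroup by character: since $k_i$ runs over $1,2,\ldots,f_a$ as $i$ ranges over the $f_a = \occ{a}{s}$ positions of $a$, the sum telescopes to $A = \log(m!) - \sum_a \log(f_a!)$. Applying Stirling's formula $\log(k!) = k \log k - k/\ln 2 + O(\log k)$ to each factorial, the $-k/\ln 2$ terms sum to $-m/\ln 2$ on both sides and cancel, leaving $A = \sum_a f_a \log(m/f_a) + O(n \log m) = Hm + O(n \log m)$.

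For the perturbations, $B = \sum_i \log(1 + 2n/i) \leq \sum_i 2n/(i\ln 2) = O(n \log m)$ is immediate. The third piece $C = \sum_i \log(k_i/\max(k_i - L, 1))$, regrouped by character, splits per character into a small-index part ($j \leq L+1$, where $\log j \leq \log L$ contributes $O(L \log L)$) and an $L$-tail part ($j \geq L+2$, where $\log(j/(j-L)) \leq L/((j-L)\ln 2)$ sums to $O(L \log m)$); summing over $n$ characters gives $C = O(nL \log m) = O(n \log^{5/2} m)$. Adding $A + B + C$ plus the $+m$ from the ceilings gives the claimed $(H+1)m + O(n \log^{5/2} m)$. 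The main obstacle is the Stirling cancellation in $A$: the $-k/\ln 2$ terms from $\log(m!)$ and from $\sum_a \log(f_a!)$ must collapse into matching $-m/\ln 2$ contributions that cancel exactly, since any residue would leave an uncontrolled $\Omega(m)$ error that the target bound cannot absorb.
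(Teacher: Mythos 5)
Your proposal is correct and follows essentially the same route as the paper: strip the ceilings for a $+m$ term, regroup $\{\occ{s[i]}{s[1..i]}\}$ by character so the main sum telescopes to $\log(m!) - \sum_a \log(\occ{a}{s}!)$, apply Stirling with the noted cancellation of the linear terms, and absorb the $+2n$ and $-\lfloor\log^{3/2}m\rfloor$ perturbations into $O(n\log^{5/2}m)$. Your multiplicative three-factor split is just a slightly different bookkeeping of the same corrections the paper handles by shifting summation limits.
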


\begin{proof}
  Let
\begin{eqnarray*}
L & = & \sum_{i = 1}^m \left\lceil \log \frac{i + 2 n}
    {\max \left( \occ{s [i]}{s [1..i]} - \lfloor \log^{3 / 2} m \rfloor, 1 \right)}
    \right\rceil\\
& < & \sum_{i = 1}^m \log (i + 2 n) -
    \sum_{i = 1}^m \log \max \left( \occ{s [i]}{s [1..i]} - \lfloor \log^{3 / 2} m \rfloor, 1 \right) + m\,.
\end{eqnarray*}
Since \(\left\{ \rule{0ex}{2ex} \occ{s [i]}{s [1..i]}\,:\,1 \leq i
  \leq m \right\}\) and \(\left\{ \rule{0ex}{2ex} j\,:\,1 \leq j \leq
  \occ{a}{s},\,\mbox{$a$ a character} \right\}\) are the same
multiset,
\begin{eqnarray*}
L & < & \sum_{i = 1}^m \log (i + 2 n) -
    \sum_a \sum_{j = 1}^{\occ{a}{s} - \lfloor \log^{3 / 2} m \rfloor} \log j + m\\
& \leq & \sum_{i = 1}^m \log i + 2 n \log (m + 2 n) -
    \sum_a \sum_{j = 1}^{\occ{a}{s}} \log j + n \log^{5 / 2} m + m\\
& = & \log (m!) - \sum_a \log (\occ{a}{s}!) + m + O (n \log^{5 / 2} m)\,.
\end{eqnarray*}
Therefore, by Stirling's Formula,\footnote{Since \(\log (m!) - \sum_a
  \log (\occ{a}{s}!) = \log \left( m! / \prod_a \occ{a}{s}! \right)\)
  is the logarithm of the number of ways to arrange the characters in
  $s$, from this point we could also establish our claim by purely
  information theoretic arguments.}
\begin{eqnarray*}
L & \leq & m \log m - m \ln 2 - \sum_a \left( \rule{0ex}{2ex}
    \occ{a}{s} \log \occ{a}{s} - \occ{a}{s} \ln 2 \right) +
    m + O (n \log^{5 / 2} m)\\
& = & \sum_a \occ{a}{s} \log \frac{m}{\occ{a}{s}} + m + O (n \log^{5 / 2} m)\\
& = & (H + 1) m + O (n \log^{5 / 2} m)\,.
\end{eqnarray*}
The second line of the above equality uses the fact that $m=\sum \occ(a)$.
\end{proof}

\section{Lower bound} \label{sec:lower bound}

It is not difficult to show that any prefix coder uses at least \((H + 1) m
- o (m)\) bits in the worst case (e.g., when $s$ consists of \(m - 1\)
copies of one character and 1 copy of another, so \(H m < \log m + \log e\)).  However, this does
not rule out the possibility of an algorithm that always uses, say, at
most \(m / 2\) more bits than static Huffman coding.
Vitter~\cite{Vit87} proved such a bound is unachievable with an
adaptive Huffman coder, and we now extend his result to all adaptive
prefix coders.  This implies that for an adaptive prefix coder to have
a stronger worst-case upper bound than ours (except for lower-order
terms), that bound can be in terms of neither the empirical entropy
nor the number of bits used by static Huffman coding.\footnote{Notice we do not
exclude the possibility of natural probabilistic settings in
which our algorithm is suboptimal --- e.g., if $s$ is drawn from a
memoryless source for which a Huffman code has smaller redundancy than
a Shannon code, then adaptive Huffman coding almost certainly achieves
better asymptotic compression than adaptive Shannon coding --- but in
this paper we are interested only in worst-case bounds.}

\begin{theorem} \label{thm:lower bound}
  Any adaptive prefix coder stores $s$ in at least \(m - o (m)\) more
  bits in the worst case than static Huffman coding.
\end{theorem}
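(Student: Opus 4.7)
The plan is to extend Vitter's adversarial lower bound for adaptive Huffman coding to arbitrary adaptive prefix coders. Given any adaptive prefix coder $\mathcal{A}$, I will construct an input string $s$ of length $m$ on which $\mathcal{A}$'s encoding exceeds the static Huffman encoding by at least $m - o(m)$ bits. The alphabet size will be $n = 2^k$ for a slowly growing $k = \omega(1)$, so that a uniform empirical distribution has a perfectly balanced Huffman code of length exactly $k$ per character, giving a Huffman cost of exactly $mk$ on any string of that type class.

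I construct $s$ online and adversarially. At step $i$, the current code $\calC_i$ used by $\mathcal{A}$ is determined by $s[1 .. i-1]$. Kraft's inequality applied to a prefix code over $2^k$ characters gives the dichotomy: either $\calC_i$ is the fully balanced code with all codeword lengths equal to $k$, or some codeword has length at least $k+1$. The adversary always picks $s[i]$ to be a character whose codeword in $\calC_i$ attains the maximum length. This forces $|\calC_i(s[i])| \geq k$ at every step and $|\calC_i(s[i])| \geq k+1$ at every unbalanced step. If the number of balanced steps is $o(m)$, then $|\mathcal{A}(s)| \geq (k+1)m - o(m) = mk + m - o(m)$.

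Two subclaims then close the argument. First, the set of balanced steps must have size $o(m)$: an adaptive coder whose code is perfectly balanced at step $i$ must treat all characters symmetrically in its internal estimates, and a single occurrence of the character just encoded typically breaks this symmetry in any reasonable code, so balanced codes can recur only on a sparse set of indices. Second, and this is where the main difficulty lies, the empirical distribution of $s$ must stay close enough to uniform that the actual static Huffman cost is $mk + o(m)$. Because the adversary consistently selects the longest current codeword, typically a rare character, frequencies are pushed back toward uniform; a concentration bound should then show the empirical frequencies stay within $o(1)$ of $1/n$ in $\ell_\infty$, bounding the Huffman cost by $mk + o(m)$.

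The main obstacle is this second subclaim: the adversary's choices depend on $\mathcal{A}$'s state, so standard concentration tools do not apply out of the box. Vitter's argument for adaptive Huffman circumvents it by exploiting the sibling property, which tightly ties codeword lengths to frequency ranks. For a general adaptive prefix coder only Kraft's inequality is available. I plan to handle this either by randomizing the adversary's tiebreaking when several characters share the maximum codeword length, controlling the resulting random walk of frequencies by a martingale argument and then extracting a deterministic worst-case instance via the probabilistic method, or, as a fallback, by averaging over the orbit of a single adversarial construction under cyclic relabelings of the alphabet and using that the Huffman cost depends only on the multiset of frequencies.
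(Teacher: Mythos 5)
There is a genuine gap, and it lies exactly where you locate the difficulty --- but the fix is not a concentration or martingale argument; it is a different choice of alphabet size. With \(n = 2^k\) your first subclaim is simply false: an adaptive prefix coder is free to use the perfectly balanced code with all codewords of length \(k\) at \emph{every} step (nothing in the definition forces the code to change after each character), so "balanced steps" need not be sparse, your adversary extracts only \(k\) bits per character from such a coder, and the claimed bound \((k+1)m - o(m)\) on the adaptive cost does not follow. Your second subclaim (that the empirical distribution stays near uniform, so that static Huffman costs \(km + o(m)\)) is also unproven, and as you note the adversary's choices are adapted to the coder's state, so neither of your proposed repairs is routine.

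The paper's proof avoids both problems with one move: it takes \(n = m^{1/2} = 2^\ell + 1\), one \emph{more} than a power of two (and starts \(s\) with an enumeration of the alphabet). Then Kraft's inequality forces \emph{every} prefix code on \(n\) characters to have at least two codewords of length at least \(\ell + 1\) --- there are no balanced steps at all --- so the adversary, by always requesting such a character, forces at least \((\ell+1)m - o(m)\) bits from any adaptive coder. On the other side, no control of the empirical distribution is needed: the static code assigning length \(\ell\) to the \(n-2\) most frequent characters and length \(\ell+1\) to the two least frequent is a valid prefix code (its Kraft sum is exactly 1) and costs at most \(\ell m + O(m/n) = \ell m + o(m)\) on \emph{any} string over this alphabet; since Huffman is optimal, static Huffman costs at most this much. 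Subtracting gives \(m - o(m)\) immediately. You should replace your two subclaims with this choice of \(n\); as written, your argument cannot be completed against the always-balanced coder.
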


\begin{proof}
  Suppose \(n = m^{1 / 2} = 2^\ell + 1\) and the first $n$ characters
  of $s$ are an enumeration of the alphabet.  For \(n < i \leq m\),
  when the adaptive prefix coder reaches \(s [i]\), there are at least
  two characters assigned codewords of length at least \(\ell + 1\);
  therefore, in the worst case, the coder uses at least \((\ell + 1) m
  - o (m)\) bits.  On the other hand, a static prefix coder can assign
  codewords of length $\ell$ to the \(n - 2\) most frequent characters
  and codewords of length \(\ell + 1\) to the two least frequent ones,
  and thus use at most \(\ell m + o (m)\) bits.  Therefore, since a Huffman code minimizes the
expected codeword length, any adaptive prefix coder uses at least \(m
- o (m)\) more bits in the worst case than static Huffman coding.
\end{proof}

\section{Online stable sorting} \label{sec:sorting}

Consider $s$ as a multiset of characters and suppose we want to sort
it stably, online and using only binary comparisons.  A stable sort is
one that preserves the order of equal elements, and by `online' we
mean every comparison must have the character we read most recently as
one of its two arguments.  Gagie~\cite{Gag07} noted that, by replacing
Shannon's construction by a modified construction due to Gilbert and
Moore~\cite{GM59}, his coder can be used to sort $s$ using \((H + 2) m
+ o (m)\) comparisons and \(O (\log n)\) worst-case time for each
comparison.  
We can use
our results to speed up Gagie's sorter when \(n = o (\sqrt{m / \log
  m})\).

Whereas Shannon's construction assigns a prefix-free binary codeword
of length \(\lceil \log (1 / p_i) \rceil\) to each character with
probability \(p_i > 0\), Gilbert and Moore's construction assigns a
codeword of length \(\lceil \log (1 / p_i) \rceil + 1\).  If we take
the trie of the codewords, label the leaves from left to right with
the characters in the alphabet and label each internal node with the
label of the rightmost leaf in its left subtree, the result is a
leaf-oriented binary search tree.  Building this tree takes \(O (n)\)
time because, unlike Shannon's construction, we do not need to
sort the characters by probability.  Hence, although we don't know how
to update the alphabetic tree efficiently, we can construct it from scratch
in $O(n)$ time. We can apply the same approach as in previous sections,
and use searching  with delays: while we use the optimal alphabetic tree for
$s[1..i-n/2]$ to identify symbols $s[i],s[i+1],\ldots, s[i+n/2]$,
we construct the tree for $s[1..i]$ in the background. If we use
$O(1)$ time per symbol to construct the next optimal tree, the next
tree will be completed when $s[i+n/2]$ is identified.
Hence, we identify each character using at most
\[\left\lceil \log \frac{i + n}
  {\max \left( \occ{s [i]}{s [1..i]} - n, 1 \right)} \right\rceil +
1\] comparisons and \(O (1)\) worst-case time for each comparison.
A detailed description of the algorithm wil be given in the full version of
this paper.
The following technical lemma, whose proof we omit because it is
essentially the same as that of Lemma~\ref{lem:analysis}, bounds the
total number of comparisons we use to sort $s$ and, thus, implies our
speed-up.

\begin{lemma} \label{lem:sorting analysis}
\[\sum_{i = 1}^m \left\lceil \log \frac{i + n}
  {\max \left( \occ{s [i]}{s [1..i]} - n, 1 \right)} \right\rceil + m
\leq (H + 2) m + O (n^2 \log m)\,.\]
\end{lemma}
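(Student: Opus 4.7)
The plan is to mimic the proof of Lemma~\ref{lem:analysis} almost verbatim, adjusting for the two differences in the summand (numerator shift is $n$ rather than $2n$, and subtraction is $n$ rather than $\lfloor \log^{3/2} m\rfloor$) and the additional explicit $+m$ on the left-hand side. As before, the ceiling costs at most $m$, so after dropping it I would work with
\[L \;<\; \sum_{i=1}^m \log(i+n) \;-\; \sum_{i=1}^m \log \max\!\left(\occ{s[i]}{s[1..i]} - n,\,1\right) \;+\; 2m.\]

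Next I would use the same multiset identity as in Lemma~\ref{lem:analysis}: $\{\occ{s[i]}{s[1..i]} : 1 \le i \le m\}$ equals $\{j : 1 \le j \le \occ{a}{s},\ a \text{ a character}\}$. Under this reindexing, the negative sum becomes $\sum_a \sum_{j=1}^{\occ{a}{s}} \log \max(j-n,1)$. The key discrepancy with the clean $\sum_a \log(\occ{a}{s}!)$ is controlled per character: for any $N \ge 0$,
\[\sum_{j=1}^N \log j \;-\; \sum_{j=1}^N \log \max(j-n,1) \;\le\; n \log m,\]
since in the case $N > n$ the difference telescopes to $\log\bigl(N(N-1)\cdots(N-n+1)\bigr) \le n \log N$, and in the case $N \le n$ the left-hand sum is itself at most $n \log n$. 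Summing over the at most $n$ distinct characters contributes an error of $O(n^2 \log m)$. Similarly, the positive sum satisfies $\sum_{i=1}^m \log(i+n) = \log((m+n)!) - \log(n!) \le \log(m!) + n\log(m+n) = \log(m!) + O(n \log m)$, which is absorbed into the same error term.

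Combining these estimates gives $L \le \log(m!) - \sum_a \log(\occ{a}{s}!) + 2m + O(n^2 \log m)$. Applying Stirling's formula exactly as in the proof of Lemma~\ref{lem:analysis} collapses the factorial difference to $\sum_a \occ{a}{s} \log(m/\occ{a}{s}) + O(n \log m) = Hm + O(n \log m)$, and the bound $(H+2)m + O(n^2 \log m)$ follows. No step poses a real obstacle since the proof structure is dictated by Lemma~\ref{lem:analysis}; the only point requiring slight care is the per-character telescoping bound above, which replaces the role of the crude $n \log^{5/2} m$ slack used in the earlier proof and is responsible for the different error term $O(n^2 \log m)$.
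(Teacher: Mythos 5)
Your proof is correct and follows exactly the route the paper intends: the paper omits the proof of this lemma precisely because it is ``essentially the same'' as that of Lemma~\ref{lem:analysis}, and your adaptation (replacing the $\lfloor\log^{3/2}m\rfloor$ slack by the per-character telescoping bound $\log\bigl(N(N-1)\cdots(N-n+1)\bigr)\le n\log m$, which summed over at most $n$ characters yields the $O(n^2\log m)$ error term) is the right adjustment. No issues.
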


\begin{theorem} \label{thm:sorting}
  If \(n = o (\sqrt{m / \log m})\), then we can sort $s$ stably and
  online using \((H + 2) m + o (m)\) binary comparisons and \(O (1)\)
  worst-case time for each comparison.
\end{theorem}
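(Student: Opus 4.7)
The plan is to follow the delayed-code scheme of Section~\ref{sec:algorithm} but replace the Shannon code by a Gilbert and Moore alphabetic code, so that identifying a character reduces to a binary search in the resulting tree. Concretely, I would process $s$ in groups of $\lfloor n/2 \rfloor$ symbols: while the symbols $s[i+1], \ldots, s[i+n/2]$ are being identified by binary search in an alphabetic tree $T$ that is optimal for $s[1..i-t]$ with $n/2 \le t < n$, an alphabetic tree $T'$ optimal for $s[1..i]$ is built in the background, doing $O(1)$ construction work each time a new symbol is processed. Gilbert and Moore's construction runs in $O(n)$ time, because the characters already come in alphabetic order and no sorting of probabilities is required, so distributing the work over the $n/2$ symbols of the current window finishes $T'$ exactly when it is needed. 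The frequencies driving the construction are maintained by an array of counters, updated in $O(1)$ per symbol as characters arrive.

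Identification itself is a root-to-leaf walk in the current alphabetic tree; each edge is a single binary comparison followed by a constant-time pointer move, which gives the per-comparison $O(1)$ worst-case bound directly. Since $T$ is optimal for $s[1..i-t]$, the Gilbert and Moore bound tells us that the depth of $s[i]$ in $T$ is at most $\lceil \log(1/p) \rceil + 1$, where $p$ is the empirical probability of $s[i]$ in $s[1..i-t]$. Using $\occ{s[i]}{s[1..i-t]} \ge \max(\occ{s[i]}{s[1..i]} - n, 1)$ and $i - t + n \le i + n$, I obtain the per-symbol bound
\[\left\lceil \log \frac{i + n}{\max\!\left(\occ{s[i]}{s[1..i]} - n,\ 1\right)} \right\rceil + 1,\]
which is exactly the summand in Lemma~\ref{lem:sorting analysis}. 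Summing over $i$ and invoking that lemma bounds the total number of comparisons by $(H+2)m + O(n^2 \log m)$, and the hypothesis $n = o(\sqrt{m/\log m})$ forces this error term to be $o(m)$, yielding the theorem.

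The main obstacle will be coordinating the two coexisting trees: while $T'$ is being assembled in the background, $T$ must remain consistent for the ongoing searches, and then $T$ must be atomically retired. I would handle this exactly as in Section~\ref{sec:algorithm}, by keeping an old and a new value at each modified pointer together with a timestamp $b$ indicating the index after which the new value becomes active, and by arranging the incremental Gilbert and Moore construction so that each node of the tree is touched a constant number of times across the window. The swap is then just the advancement of the global clock past $i + n/2$, which keeps the per-symbol overhead at $O(1)$ worst case and preserves validity of searches in progress.
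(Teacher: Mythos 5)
Your proposal is correct and follows essentially the same route as the paper: delayed Gilbert--Moore alphabetic coding with a window of $n/2$ symbols, background construction of the next tree at $O(1)$ work per symbol, the per-symbol comparison bound $\bigl\lceil \log \frac{i+n}{\max(\occ{s[i]}{s[1..i]}-n,1)}\bigr\rceil + 1$, and an appeal to Lemma~\ref{lem:sorting analysis} together with the hypothesis $n = o(\sqrt{m/\log m})$. In fact you supply somewhat more implementation detail (the old/new timestamped pointers for the coexisting trees) than the paper, which defers these points to its full version.
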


We prove the following theorem by essentially the same arguments as
for Theorem~\ref{thm:lower bound}.  It shows that, when \(n = o
(\sqrt{m / \log m})\), our sorter is essentially optimal.  We leave as
an open problem finding a sorter that uses \((H + 2) m + o (m)\)
comparisons and \(O (1)\) worst-case time per comparison when $n$ is
closer to $m$.

\begin{theorem} \label{thm:sorting LB}
  Any online stable sort uses at least \((H + 2) m - o (m)\) binary
  comparisons in the worst case.
\end{theorem}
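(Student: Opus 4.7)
The plan is to imitate the adversarial construction used in the proof of Theorem~\ref{thm:lower bound}, adapting it to the binary comparison model. I would fix an alphabet of constant size $n\ge 3$ (e.g., $n=3$), let the first $n$ characters of $s$ enumerate the alphabet $c_1<c_2<\cdots<c_n$, and let every remaining character equal the median $a=c_{\lceil n/2\rceil}$. For this $s$ we have $\occ{a}{s}=m-n+1$ and each other character occurs exactly once, so $H=O((\log m)/m)=o(1)$ and $(H+2)m=2m+o(m)$; the task therefore reduces to showing that any online stable sorter uses at least $2m-o(m)$ comparisons on $s$.

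For each $i$ with $n<i\le m$, the comparisons performed on $s[i]$ form a binary decision tree $T_i$ determined entirely by the history $s[1\,..\,i-1]$. Because each comparison returns $<$ or $\ge$ against some alphabet character and because stability forces a unique correct insertion position for every possible value of $s[i]$, the tree $T_i$ is a leaf-oriented binary search tree on $c_1<\cdots<c_n$ with exactly one leaf per alphabet character. The key structural claim I would establish is that, in any such BST on $n\ge 3$ leaves, the median character $a$ must lie at depth $\ge 2$: a depth-$1$ leaf is a child of the root, but a BST root's key splits its characters into two \emph{contiguous} ranges (characters $<k$ versus characters $\ge k$), so the only single-leaf subtrees possible at depth $1$ are the ones containing the minimum and the maximum character, neither of which is $a$. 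Hence the sorter uses at least $2$ comparisons on every $s[i]$ with $i>n$, for a total of at least $2(m-n)=2m-o(m)$ comparisons, which matches the target $(H+2)m-o(m)$.

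This argument follows exactly the template of the proof of Theorem~\ref{thm:lower bound}: there, Kraft's inequality was used to force two codewords of length $\ge\ell+1$ in any prefix code on $n=2^\ell+1$ characters, yielding the coding bound $(H+1)m-o(m)$ against an adversarially chosen $s$; here, the finer alphabetic/BST structure imposed by $<$-versus-$\ge$ comparisons together with the stability requirement yields the stronger structural guarantee that the median character cannot be placed at depth $1$, which accounts for the extra $+1$ that separates the sorting bound $(H+2)m$ from the coding bound $(H+1)m$. The main (rather minor) obstacle is justifying the BST structural claim above --- a short consequence of the contiguous-range property of BST subtrees --- while the rest of the proof (computing $H$ on the adversarial $s$ and summing the per-character bound over $i$) is routine.
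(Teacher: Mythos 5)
Your proof is correct for the theorem as literally stated, but it takes a genuinely different --- and in an important sense weaker --- route than the paper's. The paper does not use a fixed low-entropy string: it takes \(n = m^{1/2} = 2^{\ell+1}+2\), lets \(s\) contain only the \(2^\ell+1\) even-numbered characters, and runs an \emph{adaptive} adversary that at each step inspects the sorter's current decision tree (which must still have a singleton leaf for each of the \(n\) characters, odd ones included) and chooses an even-numbered character whose leaf lies at depth at least \(\ell+2\). The resulting hard string has \(H = \Theta(\log m)\) and forces at least \((\ell+2)m - o(m) \geq (H+2)m - o(m)\) comparisons. Your construction, with \(n=3\) and \(Hm = O(\log m)\), collapses the target to \(2m - o(m)\); your BST claim that the median leaf must have depth at least \(2\) is correct for the reason you give, so you do establish that the additive constant \(2\) cannot be lowered. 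What your version does not establish is anything about the coefficient of \(H\): it would not, for instance, rule out a sorter using \((H/2+2)m + o(m)\) comparisons, whereas the paper's construction does --- and that is what is needed to conclude that the upper bound of Theorem~\ref{thm:sorting} is matched in the regime where \(H\) is large. One further point you gloss over, and which is more delicate in your construction than in the paper's: you must justify that the sorter cannot defer the identification of \(s[i]\) to comparisons made at later steps (recall that at step \(j>i\) the sorter may compare \(s[j]\) against position \(i\)). In your string every later character \emph{is} the median, so such later comparisons genuinely could disambiguate \(s[i]\); the resolution, which the paper makes explicit, is that a correct sorter must handle every continuation, including ones whose characters all lie outside the interval of remaining candidates for \(s[i]\) and hence yield no further information about it, so \(s[i]\) must be fully resolved before \(s[i+1]\) is read on every input. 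With that observation added, your argument is a complete (if less informative) proof of the stated inequality.
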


\proof Suppose \(n = m^{1 / 2} = 2^{\ell + 1} + 2\), $s$ contains only
the even-numbered characters in the alphabet, and the first \(2^\ell +
1\) characters of $s$ are an enumeration of those characters.  It is
not difficult to show that, for \(2^\ell + 1 < i \leq m\), the online
stable sorter must determine the identity of \(s [i]\), in case it is
an odd-numbered character; moreover, it must do this before reading
\(s [i + 1]\), in case all the remaining characters in $s$ are smaller
than \(s [i]\)'s predecessor in \(\{s [j]\,:\,1 \leq j < i\}\) or
larger than its successor.  Therefore, after \(s [2^\ell + 1]\), we
can view the sorter as processing each character using a binary
decision tree with $n$ leaves, labelled from left to right with the
characters in the alphabet in lexicographic order.  (Since we are
proving a worst-case lower bound, we assume without loss of generality
that the sorter is deterministic.)  For \(2^\ell + 1 < i \leq m\),
when the sorter reaches \(s [i]\), there must be a leaf at depth at
least \(\ell + 2\) that is labelled with an even-numbered character in
the alphabet.  Therefore, even when $s$ contains only the
even-numbered characters in the alphabet, the sorter uses at least
\((\ell + 2) m - o (m)\) comparisons in the worst case.  Since $s$
contains only \(2^\ell + 1\) distinct characters, however, \(H \leq
\log (2^\ell + 1)\) and calculation shows \((H + 2) m - o (m) \leq
(\ell + 2) m - o (m)\).  \qed

\section{Other Coding Problems}\label{sec:other}
Several variants of the prefix coding problem were considered and
extensively studied. In the alphabetic coding problem~\cite{GM59},
codewords must be sorted lexicographically, i.e. $i< j\Rightarrow $
$c(a_i)< c(a_j)$, where $c(a_k)$ denotes the codeword of $a_k$.  In
the length-limited coding problem, the maximum codeword length is
limited by a parameter $F >\log n$.  In the coding with unequal letter
costs problem, one symbol in the code alphabet  costs more than
 another and we
 want to minimize the average cost of a codeword.  All of the above
problems were studied in the static scenario.  Adaptive prefix coding
algorithms for those problems were considered in~\cite{Gag04}. In this
section we show that the good upper bounds on the length of the
encoding can be achieved by algorithms that encode in $O(1)$
worst-case time.  The main idea of our improvements is that we encode
a symbol $s[i]$ in the input string with a code that was constructed
for the prefix $s[1..i-d]$ of the input string, where the parameter
$d$ is chosen in such a way that a corresponding (almost) optimal code
can be constructed in $O(d)$ time.  Using the same arguments as in the
proof of Lemma~\ref{lem:analysis} we can show that encoding with
delays increases the length of encoding by an additive term of
$O(d\cdot n \log m)$ (the analysis is more complicated in the case of
coding with unequal letter costs).  We will provide proofs in
the full version of this paper.

{\bf Alphabetic Coding.}  The algorithm of section~\ref{sec:sorting}
can be used for adaptive alphabetic coding. The length of encoding
is $((H + 2) m + o (m)$ provided that $n=o(\sqrt{m/\log m})$.
Unfortunately we cannot use the encoding and decoding methods of
section~\ref{sec:canonical} because the alphabetic coding is not
canonical.  When an alphabetic code for the following group of $O(n)$
symbols is constructed, we also create in $O(n)$ time a table that
stores the codeword of each symbol $a_i$. Such a table can be created
from the alphabetic tree in $O(n)$ time; hence, the complexity of the
encoding algorithm is not increased.  We can decode the next codeword
by searching in the data structure that contains all codewords. Using
a data structure due to Andersson and Thorup~\cite{AT07} we can decode
in $O(\min(\sqrt{\log n},\log \log m)$ time per symbol.
\begin{theorem}
\label{theor:alphabetic}
There is an algorithm for adaptive alphabetic prefix coding that
encodes and decodes each symbol of a string $s$ in $O(1)$ and
$O(\min(\sqrt{\log n}, \log \log m))$ time respectively.  If
$n=o(\sqrt{m/\log m})$, the encoding length is $((H + 2) m + o (m)$.
\end{theorem}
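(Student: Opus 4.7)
The plan is to recycle the delayed-coding skeleton of Section~\ref{sec:sorting}, replacing Shannon's construction by Gilbert and Moore's. Partition $s$ into blocks of length $d=\lfloor n/2\rfloor$. While the current block is being encoded with an optimal alphabetic code $\calC$ for a prefix that ends $d$ symbols before the block started, build in the background an optimal alphabetic code $\calC'$ for the prefix ending at the previous block boundary. Since Gilbert and Moore's construction receives the characters in their fixed alphabetic order and therefore does not sort by probability, the whole tree for an arbitrary prefix can be built from scratch in $O(n)$ worst-case time; splitting this work evenly across the $d=\Theta(n)$ symbols of the current block charges only $O(1)$ worst-case background time per symbol.

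For encoding, I would piggy-back on the background construction an Euler-tour of the new tree that writes each symbol's codeword (at most $\lceil\log m\rceil$ bits, fitting in a machine word) together with its length into an array $C'[1..n]$; this adds another $O(n)$ time to the block construction, still $O(1)$ per symbol. Encoding $s[j]$ is then a single lookup followed by writing a bounded number of bits, so it runs in $O(1)$ worst-case time. To avoid corrupting the code currently in use, the array is swapped at the block boundary using the old/new/separator trick applied to the matrix $M$ in Section~\ref{sec:algorithm}. For decoding, during the same background pass I would insert the $n$ codewords, each padded to $\lceil\log m\rceil$ bits with trailing zeros, into the predecessor structure of Andersson and Thorup~\cite{AT07}; decoding reads the next $\lceil\log m\rceil$ bits into a word $w$, asks for $\pred(w, \cdot)$, and thereby recovers the symbol and its codeword length in $O(\min(\sqrt{\log n},\log\log m))$ worst-case time.

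The bit-length bound then falls out of Section~\ref{sec:sorting}: with a delay of at most $2d\le n$, the codeword assigned to $s[i]$ has length at most $\lceil\log((i+n)/\max(\occ{s[i]}{s[1..i]}-n,1))\rceil+1$, and Lemma~\ref{lem:sorting analysis} bounds the sum over $i$ by $(H+2)m+O(n^2\log m)$, which is $(H+2)m+o(m)$ whenever $n=o(\sqrt{m/\log m})$. The main obstacle will be the worst-case slicing of the background work: the Andersson--Thorup build and the Euler-tour table-fill must each be cut into $d$ equal-cost pieces so that every symbol pays only a constant amount of background work, and the two versions of $C'$ (and of the predecessor structure) must coexist during a block so that the switch at the boundary is atomic; this is where the old/new/separator technique from Section~\ref{sec:algorithm} really earns its keep.
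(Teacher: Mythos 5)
Your proposal is correct and follows essentially the same route as the paper: delayed coding with a $\Theta(n)$ delay exploiting the $O(n)$ from-scratch construction of the Gilbert--Moore tree, an $O(n)$-time codeword table for constant-time encoding, an Andersson--Thorup predecessor structure over the padded codewords for decoding, and Lemma~\ref{lem:sorting analysis} for the $(H+2)m+O(n^2\log m)$ length bound. The paper itself only sketches these steps (deferring details to a full version), and your filling-in of the background-work slicing and the old/new switching matches its intent.
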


{\bf Coding with unequal letter costs.}  Krause~\cite{Kra62} showed
how to modify Shannon's construction for the case in which code
letters have different costs, e.g., the different durations of dots
and dashes in Morse code.  Consider a binary channel and suppose
\cost{0}\ and \cost{1}\ are constants with \(0 < \cost{0} \leq
\cost{1}\).  Krause's construction gives a code such that, if a symbol
has probability $p$, then its codeword has cost less than \(\ln (p) /
C + \cost{1}\), where the channel capacity $C$ is the largest real
root of \(e^{- \cost{0} \cdot x} + e^{- \cost{1} \cdot x} = 1\) and
$e$ is the base of the natural logarithm.  It follows that the
expected codeword cost in the resulting code is \(H \ln 2 / C +
\cost{1}\), compared to Shannon's bound of \(H \ln 2 / C\).  Based on
Krause's construction, Gagie gave an algorithm that produces an
encoding of $s$ with total cost at most \(\left( \frac{H \ln 2}{C} +
  \cost{1} \right) m + o (m)\) in \(O (m \log n)\) time.
Since the code of Krause~\cite{Kra62} can be constructed in $O(n)$ time,
we  can use the encoding with delay $n$ and achieve $O(1)$ worst-case
time.
Since the
costs are constant and the minimum probability is \(\Omega (1 / m)\),
the maximum codeword length is \(O (\log m)\).  Therefore, we can decode
using the data structure described above.
\begin{theorem}
  There is an algorithm for adaptive prefix coding with unequal letter
  costs that encodes and decodes each symbol of a string $s$ in $O(1)$
  and $O(\min(\sqrt{\log n}, \log \log m))$ time respectively.  If
  $n=o(\sqrt{m/\log m})$, the encoding length is $\left( \frac{H \ln
      2}{C} + \cost{1} \right) m + o(m)$.
\end{theorem}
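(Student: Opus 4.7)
The plan is to instantiate the delayed-coding framework of Section~\ref{sec:algorithm} with Krause's construction in place of Shannon's, taking the delay to be $d=n$. Because Krause's code for any prefix $s[1..i]$ can be built from scratch in $O(n)$ time given the current frequency counts, we pipeline the construction in the background: while symbols $s[i+1],\ldots,s[i+n]$ are encoded using the Krause code for $s[1..i-n]$, we build the Krause code for $s[1..i]$, dividing its $O(n)$ work evenly across the $n$ encoding steps of the current window. A simple encoding table that maps each symbol $a_j$ to its codeword is produced as a byproduct of Krause's construction in $O(n)$ total time. The old/new slot trick of Section~\ref{sec:algorithm} lets us install the new table at the switchover without a global rebuild step, so encoding always takes $O(1)$ worst-case time per symbol.

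For decoding, the codewords have length $O(\log m)$ because letter costs are constant and the minimum symbol probability is $\Omega(1/m)$. We store the at most $n$ current codewords (each viewed as an integer of $O(\log m)$ bits) in the Andersson--Thorup predecessor structure~\cite{AT07}, which supports queries in $O(\min(\sqrt{\log n},\log\log m))$ worst-case time. To decode one symbol, we read $O(\log m)$ bits into a word $w$, call $\pred(w,D)$ to identify which codeword starts the buffer, and read off the corresponding symbol. The Andersson--Thorup structure on $n$ keys of $O(\log m)$ bits can be built in time $O(n\,\mathrm{polylog}\, n)$, which amortizes to $o(1)$ per symbol over the delay window of length $n$; pipelining this construction in the background, exactly as for encoding, yields the stated decoding time bound.

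For the length bound, the plan is to mimic Lemma~\ref{lem:analysis} with Krause's bound in place of Shannon's. Using the delayed code, symbol $s[i]$ is charged a cost of at most $\ln(1/\tp_i)/C + \cost{1}$, where $\tp_i$ is an estimate of the form $\max(\occ{s[i]}{s[1..i-d]},1)/(i-d+n)$. Summing over $i$ and invoking $\occ{a}{s[1..i-d]}\ge \max(\occ{a}{s[1..i]}-d,1)$, together with the identity $\sum_i \ln \occ{s[i]}{s[1..i]} = \sum_a \ln(\occ{a}{s}!)$ and Stirling's formula — precisely as in the proof of Lemma~\ref{lem:analysis}, but with $\log$ replaced by $\ln/C$ and the per-symbol overhead $1$ replaced by $\cost{1}$ — yields a total encoding cost of $\left(\tfrac{H\ln 2}{C}+\cost{1}\right)m + O(dn\log m)$. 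With $d=n$ the slack term is $O(n^{2}\log m)$, which is $o(m)$ exactly when $n=o(\sqrt{m/\log m})$.

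The main obstacle is the last step: Krause's per-symbol cost bound is a transcendental inequality with ceilings relative to two different letter costs, so the "replace $p_i$ by $\tp_i$" substitution is less clean than in the Shannon case, and one must verify that the rounding in Krause's construction absorbs the difference between $\ln(1/p_i)$ and $\ln(1/\tp_i)$ up to the additive $O(dn\log m)$ slack. Once this technical check is in place, the argument reduces to the same calculation as Lemma~\ref{lem:analysis}; the authors defer the full calculation to the long version, and we follow the same convention.
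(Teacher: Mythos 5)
Your proposal matches the paper's argument essentially step for step: delay $d=n$ with Krause's $O(n)$-time construction pipelined in the background for $O(1)$ worst-case encoding, the Andersson--Thorup predecessor structure over the $O(\log m)$-bit codewords for decoding, and a Lemma~\ref{lem:analysis}-style calculation giving slack $O(n^2\log m)=o(m)$ under $n=o(\sqrt{m/\log m})$. You even flag the same technical subtlety the authors acknowledge (the Krause cost bound under estimated probabilities) and defer it exactly as they do, so the proposal is correct and takes the same route as the paper.
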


{\bf Length-limited coding.}  Finally, we can design an algorithm for
adaptive length-limited prefix coding by modifying the algorithm of
section~\ref{sec:algorithm}.  Using
the same method as in~\cite{Gag04} --- i.e., smoothing the
distribution by replacing each probability with a weighted average of
itself and \(1 / n\) --- we set the codeword length of symbol $s[i]$
to $\lceil \log \frac{2^f}{(2^f-1)x+1/n} \rceil$ instead of $\lceil
\log\frac{1}{x}\rceil$, where $x= \frac{\max \left( \occ{s [i]}{s
      [1..i]} - \lfloor \log^{3 / 2} m \rfloor, 1 \right)}{i + 2 n}$
and $f=F-\log n$.  We observe that the codeword lengths $l_i$ of this
modified code satisfy the Kraft-McMillan inequality:
\begin{eqnarray*}
\sum_i 2^{-l_i}
& \leq & \sum_x ((2^f - 1) x + 1 / n) / 2^f\\
& = & \sum_x (2^f x) / 2^f - \sum_x  x / 2^f + n (1 / n) / 2^f=1\\
\end{eqnarray*}
Therefore we can construct and maintain a canonical prefix code with
codeword lengths $l_i$.  Since $\frac{2^f}{(2^f-1)x+1/n} \leq
\min(\frac{2^f}{(2^f-1)x},\frac{2^f}{1/n})$,
$\lceil \log \frac{2^f}{(2^f-1)x+1/n}\rceil \leq
\min( \lceil \log \frac{2^f}{(2^f-1)x}\rceil, \log n + f)\,.$. Thus
the codeword length is always smaller than $F$.  We can estimate the
encoding length by bounding the first part of the above expression:
$\lceil \log \frac{2^f}{(2^f-1)x}\rceil < x +1 + \log\frac{2^f
  +1}{2^f}$ and $\log\frac{2^f +1}{2^f} = (1/2^f)\log
(1+\frac{1}{2^f})^{2^f}\leq \frac{1}{2^f\ln 2}$.  Summing up by all
symbols $s[i]$, the total encoding length does not exceed
\[\sum_{i = 1}^m  \log
\frac{i + 2 n} {\max \left( \occ{s [i]}{s [1..i]} - \lfloor \log^{3 /
      2} m \rfloor, 1 \right)} + m + \frac{m}{2^f\ln 2}\,.\] We can
estimate the first term in the same way as in
Lemma~\ref{lem:analysis}; hence, the length of the encoding is
$(H+1+\frac{1}{2^f\ln 2})m + O(n\log^{5/2} m)$. We thus obtain the following
theorem:
\begin{theorem}
  There is an algorithm for adaptive length-limited prefix coding that
  encodes and decodes each symbol of a string $s$ in $O(1)$ time.  The
  encoding length is $(H+1+\frac{1}{2^f\ln 2})m + O(n\log^{5/2} m)$,
  where $f = F-\log n$ and $F$ is the maximum codeword length.
\end{theorem}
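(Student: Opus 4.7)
The plan is to port the delayed canonical-Shannon framework of Section~\ref{sec:algorithm} onto a smoothed length assignment in the spirit of~\cite{Gag04}. Concretely, when the algorithm of Section~\ref{sec:algorithm} would assign $s[i]$ a codeword of length $\lceil \log (1/x) \rceil$ with $x = \max(\occ{s[i]}{s[1..i]} - \lfloor \log^{3/2} m\rfloor, 1)/(i+2n)$, I would instead assign it the length $l_i = \lceil \log \frac{2^f}{(2^f-1)x + 1/n}\rceil$, where $f = F - \log n$. This is the same smoothing $p \mapsto (1 - 1/2^f)p + (1/n)/2^f$ used statically in~\cite{Gag04}, so the resulting length set still corresponds to a canonical prefix code.

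Next I would verify the three basic invariants. (i) \emph{Kraft--McMillan holds}: since $\sum_i ((2^f-1)x_i + 1/n)/2^f = (2^f-1)(\sum_i x_i)/2^f + 1/2^f \le 1$, we have $\sum_i 2^{-l_i} \le 1$, so a canonical code with these lengths exists. (ii) \emph{Lengths are bounded by $F$}: from $1/((2^f-1)x + 1/n) \le n$ we get $l_i \le \lceil \log(n\cdot 2^f)\rceil = F$, i.e.\ the code is length-limited. (iii) \emph{Maintainability}: because $l_i$ is still a function of $\occ{s[i]}{s[1..i-t]}$ that changes by $O(1)$ between neighbouring blocks of $d = \lfloor \log^{3/2} m\rfloor/2$ symbols, the background-rebuilding scheme of Section~\ref{sec:algorithm} applies verbatim: we maintain $C[]$, $L[]$, $M$ and $D$ using Lemma~\ref{lemma:rebuild} and Corollary~\ref{cor:dictionary}, amortising construction over the $d$ symbols of each block to obtain $O(1)$ worst-case time per symbol for both encoding and (canonical-code) decoding.

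For the length bound I would bound the ceiling by splitting off the smoothing term:
\[
l_i \;\le\; \Bigl\lceil \log \tfrac{2^f}{(2^f-1)x}\Bigr\rceil
\;\le\; \Bigl\lceil \log \tfrac{1}{x}\Bigr\rceil + \log \tfrac{2^f}{2^f-1}.
\]
Using $\log\frac{2^f}{2^f-1} = -\log(1-2^{-f}) \le \tfrac{1}{(2^f-1)\ln 2}$ (or the equivalent estimate written in the excerpt), summation over $i$ gives
\[
\sum_{i=1}^m l_i \;\le\; \sum_{i=1}^m \Bigl\lceil \log \tfrac{i+2n}{\max(\occ{s[i]}{s[1..i]} - \lfloor \log^{3/2} m\rfloor, 1)}\Bigr\rceil + \frac{m}{2^f \ln 2} + O(1),
\]
and Lemma~\ref{lem:analysis} then bounds the first sum by $(H+1)m + O(n\log^{5/2}m)$, yielding the claimed total $(H + 1 + \tfrac{1}{2^f \ln 2})m + O(n\log^{5/2} m)$.

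I expect the main obstacle to be the bookkeeping around the smoothing term inside the ceiling — in particular, making sure that rounding $\lceil\cdot\rceil$ does not spoil either Kraft--McMillan (it does not, since we round up the lengths) or the additive $1/(2^f \ln 2)$ slack in the expected codeword length. Everything else — the $O(1)$ worst-case encoding/decoding, the decoding structure $D$, and the delayed background construction — is inherited without change from Section~\ref{sec:algorithm}, because the modified code is still canonical and the length of each symbol still depends only on delayed occurrence counts that are maintained by the scheme of Lemma~\ref{lemma:rebuild}.
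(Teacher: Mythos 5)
Your proposal follows the paper's proof essentially verbatim: the same smoothing of $x$ to $((2^f-1)x+1/n)/2^f$, the same Kraft--McMillan and length-limit ($l_i\le F$) checks, the same reduction of the total length to the sum bounded in Lemma~\ref{lem:analysis}, and the same inheritance of the $O(1)$-time delayed background-rebuilding machinery from Section~\ref{sec:algorithm}. The only (cosmetic) difference is that your estimate $\log\frac{2^f}{2^f-1}\le \frac{1}{(2^f-1)\ln 2}$ yields a per-symbol slack of $\frac{1}{(2^f-1)\ln 2}$ rather than the stated $\frac{1}{2^f\ln 2}$ --- but the paper's own passage from $\log\frac{2^f}{2^f-1}$ to $\log\frac{2^f+1}{2^f}$ is itself slightly lossy at exactly this step, so the two arguments are equally faithful up to this negligible constant.
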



\bibliographystyle{plain} \bibliography{prefix}

\begin{thebibliography}{10}

\bibitem{ABT99}
A.~Andersson, P.~{Bro Miltersen}, and M.~Thorup.
\newblock Fusion trees can be implemented with {AC$^0$} instructions only.
\newblock {\em Theoretical Computer Science}, 215(1--2):337--344, 1999.

\bibitem{AT07}
A.~Andersson and M.~Thorup.
\newblock Dynamic ordered sets with exponential search trees.
\newblock {\em Journal of the ACM}, 54(3), 2007.

\bibitem{Fal73}
N.~Faller.
\newblock An adaptive system for data compression.
\newblock In {\em Record of the 7th Asilomar Conference on Circuits, Systems
  and Computers}, pages 593--597, 1973.

\bibitem{FW93}
M.~L. Fredman and D.~E. Willard.
\newblock Surpassing the information theoretic bound with fusion trees.
\newblock {\em Journal of Computer and System Sciences}, 47(3):424--436, 1993.

\bibitem{Gag04}
T.~Gagie.
\newblock Dynamic {Shannon} coding.
\newblock In {\em Proceedings of the 12th European Symposium on Algorithms},
  pages 359--370, 2004.

\bibitem{Gag07}
T.~Gagie.
\newblock Dynamic {Shannon} coding.
\newblock {\em Information Processing Letters}, 102(2--3):113--117, 2007.

\bibitem{Gal78}
R.~G. Gallager.
\newblock Variations on a theme by {Huffman}.
\newblock {\em IEEE Transactions on Information Theory}, 24(6):668--674, 1978.

\bibitem{GM59}
E.~N. Gilbert and E.~F. Moore.
\newblock Variable-length binary encodings.
\newblock {\em Bell System Technical Journal}, 38:933--967, 1959.

\bibitem{Huf52}
D.~A. Huffman.
\newblock A method for the construction of minimum-redundancy codes.
\newblock {\em Proceedings of the IRE}, 40(9):1098--1101, 1952.

\bibitem{KN??}
M.~Karpinski and Y.~Nekrich.
\newblock A fast algorithm for adaptive prefix coding.
\newblock {\em Algorithmica}, to appear.

\bibitem{Knu85}
D.~E. Knuth.
\newblock Dynamic {Huffman} coding.
\newblock {\em Journal of Algorithms}, 6(2):163--180, 1985.

\bibitem{Kra62}
R.~M. Krause.
\newblock Channels which transmit letters of unequal durations.
\newblock {\em Information and Control}, 5(1):13--24, 1962.

\bibitem{MLP99}
R.~L. Milidi{\'u}, E.~S. Laber, and A.~A. Pessoa.
\newblock Bounding the compression loss of the {FGK} algorithm.
\newblock {\em Journal of Algorithms}, 32(2):195--211, 1999.

\bibitem{MT97}
A.~Moffat and A.~Turpin.
\newblock On the implementation of minimum redundancy prefix codes.
\newblock {\em IEEE Transactions on Communications}, 45(10):1200--1207, 1997.

\bibitem{SK64}
E.~S. Schwartz and B.~Kallick.
\newblock Generating a canonical prefix encoding.
\newblock {\em Communications of the ACM}, 7(3):166--169, 1964.

\bibitem{Sha48}
C.~E. Shannon.
\newblock A mathematical theory of communication.
\newblock {\em Bell System Technical Journal}, 27:379--423, 623--656, 1948.

\bibitem{TM01}
A.~Turpin and A.~Moffat.
\newblock On-line adaptive canonical prefix coding with bounded compression
  loss.
\newblock {\em IEEE Transactions on Information Theory}, 47(1):88--98, 2001.

\bibitem{Vit87}
J.~S. Vitter.
\newblock Design and analysis of dynamic {Huffman} codes.
\newblock {\em Journal of the ACM}, 1987(4):825--845, 1987.

\end{thebibliography}

\end{document}